\numberwithin{equation}{section} \textwidth=140mm \textheight=200mm
\renewcommand{\epsilon}{\varepsilon}
\renewcommand{\epsilon}{\varepsilon}
\renewcommand{\d}{\mathrm{d}}
\renewcommand{\hat}{\widehat }
\newcommand{\miqdor}{\mathrm{n}}
\newcommand{\black}{\color{black}}
\newcommand{\be}{\begin{equation}}
\newcommand{\ee}{\end{equation}}
\newcommand{\R}{\mathbb{R}}
\newcommand{\T}{\mathbb{T}}
\renewcommand{\H}{\mathbb{H}}
\newcommand{\V}{\mathbb{V}}
\newcommand{\Z}{\mathbb{Z}}
\newcommand{\cC}{{\mathcal C}}
\newcommand{\cE}{{\mathcal E}}
\newcommand{\cF}{{\mathcal F}}
\newcommand{\cG}{{\mathcal G}}
\renewcommand{\Re}{{\ensuremath{\mathrm{Re}}}}
\renewcommand{\det}{\mathop{\mathrm{det}}}
\newtheorem{theorem}{Theorem}[section]
\newtheorem{lemma}[theorem]{Lemma}
\newtheorem{proposition}[theorem]{Proposition}
\date{\today}
\begin{document}

\large
\title[Number of bound states of the Hamiltonian of a lattice two-boson system...]{Number of bound states of the Hamiltonian of a lattice two-boson system with interactions up to the next neighbouring sites}

\author{Saidakhmat N.~Lakaev, Shakhobiddin I. Khamidov and Mukhayyo O.~Akhmadova}

\address[Saidakhmat Lakaev]{Samarkand State University, 140104, Samarkand, Uzbekistan}
\email{slakaev@mail.ru}

\address[Shahobiddin Khamidov]{Institute of mathematics}
\email{shoh.hamidov1990@mail.ru}

\address[Mukhayyo Akhmadova]{Institute of mathematics}
\email{mukhayyo.akhmadova@mail.ru}


\begin{abstract}
We study the family $H_{\gamma  \lambda  \mu}(K)$, $K\in\T^2,$ of discrete Schr\"odinger operators,  associated to the Hamiltonian of a system of two identical bosons on the two-dimen\-sional lattice $\mathbb{Z}^2,$ interacting through on one site, nearest-neighbour sites and next-nearest-neighbour sites with interaction magnitudes $\gamma,\lambda$ and $\mu,$ respectively.
We prove there existence an important invariant subspace of operator $H_{\gamma  \lambda  \mu}(0)$ such that the restriction of the operator $H_{\gamma  \lambda  \mu}(0)$ on this subspace has at most  two eigenvalues lying both as below the essential spectrum as well as above it, depending on the interaction magnitude $\lambda,\mu\in \R$ (only). We also  give a sharp lower bound for the number of eigenvalues of $H_{\gamma\lambda\mu}(K)$.
\end{abstract}

\keywords{Two-particle system, lattice Schr\"odinger operator,
essential spectrum, bound states, Fredholm determinant}

\maketitle

\section{Introduction}
Lattice models have an important role in various branches of physics. Among these models are the lattice few-particle Hamiltonians
\cite{Mattis:1986}, which can be viewed as a minimalist version of the
corresponding Bose- or Fermi-Hubbard model involving a fixed finite number of particles of a certain type.  It is well known that the few-particle  lattice Hamiltonians are of a great theoretical interest already in their own
right
\cite{ALzM:2004,ALKh:2012,BPL:2017,KhLA:2021,
LAbdukhakimov:2020,LO'zdemir:2016,LDKh:2016,Lakaev:1993,LLakaev:2017,LA:2021,BKh22,LA23Lob,LIK23}.
Furthermore, these discrete Hamiltonians may be viewed as a natural
approximation for their continuous counterparts
\cite{FMerkuriev:1993} allowing to study few-particles phenomena in the
context of the theory of bounded operators.

One of the intriguing
phenomenon in the spectral theory of three-particle system is the celebrated Efimov effect \cite{Efimov:1970} which
is proven to take place not only in the continuous case but also in the lattice three-body problems \cite{ALzM:2004,
ALKh:2012,DzMSh:2011,Lakaev:1993}. Moreover, the discrete
Schr\"odinger operators represent the simplest and natural model for
description of few-particle systems formed by particles traveling
through periodic structures, say, for ultracold atoms injected into
optical crystals created by the interference of counter-propagating
laser beams \cite{Bloch:2005, Winkler:2006}.

The study of ultracold several-atom systems in optical lattices became very popular in the last
years since these systems possess highly controllable parameters, as lattice  geometry and dimensionality,
two-body potentials,  particle masses, temperature etc.  (see e.g., \cite{Bloch:2005,
J-Z:1998,JZoller:2005,LSAhufinger:2012} and references therein).

Unlike the traditional condensed matter systems, where stable
composite objects are generally formed by attractive forces, the
controllability of the ultracold atomic systems in an optical lattice gives an opportunity to experimentally observe a stable
repulsive bound pair of ultracold atoms, see e.g.,
\cite{Ospelkaus:2006, Winkler:2006}. In all these
observations Bose-Hubbard Hamiltonians became a link between the
theoretical basis and experimental results.

Single-particle Hamiltonians on a one-dimensional
lattice are already of interest in applications. For example,
in \cite {Motovilov:2001}, a one-dimensional single-particle
lattice Hamiltonian was effectively used to demonstrate how
the arrangement of molecules of a certain class in lattice structures
can increase the probability of nuclear fusion.

Contrasting  to the continuous case, the lattice few-particle system does not admit separation of the center-of-mass motion. However, the discrete
translation invariance allows one to use the Floquet-Bloch
decomposition (see, e.g., \cite[Sec. 4]{ALMM:2006}).

Particularly, the total $n$-particle lattice Hamiltonian $\mathrm{H}$ in the (quasi)momentum space representation can be decomposed as the von Neumann direct integral
\begin{equation}
\label{HK} \mathrm{H}\simeq\int\limits_{K\in \T^d} ^\oplus  H(K)\,d
K,
\end{equation}
where $\T^d$ is the $d$-dimensional torus. The fiber Hamiltonian $H(K)$ acting in the respective functional Hilbert space on $\T^{(n-1)d}$ is non  trivially dependent on the quasimomentum $K\in\T^d$ (see e.g.,
\cite{ALMM:2006, FICarroll:2002,Mogilner:1991,Mattis:1986}).

Another reason for studying discrete Hamiltonians is that they  provide a natural approximation for their continuous analogues \cite {FMerkuriev:1993}, which allows to study few-particle systems.

Discrete Schr\"odinger operators also represent a simple model
for description of few-particle systems formed by particles traveling through periodic structures, such as ultracold atoms injected into
optical crystals created by the interference of counter-propagating
laser beams \cite{Bloch:2005, Winkler:2006}. The study of ultracold few-atom systems in optical lattices have became popular
in the last years due to availability of controllable
parameters, such as temperature, particle masses, interaction potentials etc. (see e.g.,
\cite{Bloch:2005,J-Z:1998, JZoller:2005, LSAhufinger:2012}
and references therein). These possibilities allow the opportunity to experimentally observe stable repulsively bound pairs of ultracold atoms (\cite{Ospelkaus:2006,Winkler:2006}, where stable composite objects, as a rule, are formed by using the attractive forces.

Lattice Hamiltonians are of particular interest in fusion physics too. For example, in \cite{Motovilov:2001}, a one-particle one-dimensional lattice Hamiltonian has been successfully employed to show that certain class of molecules in lattice structures may enhance nuclear fusion probability.

The so called fiber Hamiltonians $H(K)$ acting in the  Hilbert space on $\T^{(n-1)d}$ nontrivially depends on the quasimomentum $K\in\T^d$ (see e.g., \cite{ALzM:2004, ALMM:2006}).
The decomposition allows us to reduce the study of the Hamiltonian H to the problem of studying fibered operators $H(K)$.

In this work, we study the spectral properties of the fiber Hamiltonians $H(K),\,K\in\T^2$ acting in the Hilbert space $L^{2,e}(\T^2)$ as
\begin{equation}\label{fiberHamilt}
H_{\gamma\lambda\mu}(K):=H_0(K) + V_{\gamma\lambda\mu},
\end{equation}
where $H_0(K)$ is the fiber kinetic-energy operator,
$$
\bigl(H_0(K)f\bigr)(p)=\cE_K(p)f(p),
$$
with
\begin{equation}\label{def:dispersion}
\cE_K(p):= 2 \sum_{i=1}^2\Big(1-\cos\tfrac{K_i}2\,\cos p_i\Big)
\end{equation}
and $V_{\gamma\lambda\mu}$ is the combined interaction potential. The parameters  $\gamma$, $\lambda$ and $\mu$, called coupling constants, describe interactions between the particles which are located on one site,  on nearest-neighboring-sites and on next-nearest-neighboring-sites of the lattice, respectively.

Notice that $V_{\gamma\lambda\mu}$ does not
depend on $K$ at all. Surely, the operators $H_0(K)$ and
$V_{\gamma\lambda\mu}$, $\gamma, \lambda,\mu\in\R$ are both bounded and
self-adjoint. Since $V_{\gamma\lambda\mu}$ is finite rank, the essential
spectrum of  $H_{\gamma\lambda\mu}(K)$  coincides with the spectrum  of $H_0(K)$, i.e.  $[\cE_{\min}(K),\allowbreak\cE_{\max}(K)],$
where
$$
\cE_{\min}(K):= 2\sum\limits_{i=1}^2\Big(1-\cos \tfrac{K_{i}}2\Big),
\qquad \cE_{\max}(K):= 2\sum\limits_{i=1}^2\Big(1+\cos
\tfrac{K_{i}}2\Big).
$$
To the best of our knowledge, the fiber Hamiltonian \eqref{fiberHamilt} represents
a new exactly solvable model.

The subspaces
\begin{equation}\label{subsp1}
L^{2,e,s}(\T^2):=\Big\{f\in L^{2,e}(\T^2):\,\, f(p_1,p_2)
=f(p_2,p_1),\,\, p_1,p_2\in\T\Big\}
\end{equation}
and
\begin{equation}\label{subsp2}
L^{2,e,a}(\T^2):=\Big\{f\in L^{2,e}(\T^2):\,\, f(p_1,p_2)
=-f(p_2,p_1),\,\,p_1,p_2\in\T\Big\}
\end{equation}
are invariant subspaces of the operator $H_{\gamma\lambda\mu}(0)$ and reduces it.
Denote by $H^s_{\gamma\lambda\mu}(0)$ and $H^a_{\gamma\lambda\mu}(0)$ the restrictions of $H_{\gamma\lambda\mu}(0)$ onto subspaces $L^{2,e,s}(\T^2)$ and $L^{2,e,a}(\T^2)$, respectively.

Note that, the operator $H^a_{\gamma\lambda\mu}(0)$ depends only on the parameters $\lambda$ and $\mu$.
Then
\begin{equation}\label{separation_K0}
\sigma(H_{\gamma\lambda\mu}(0)) = \sigma(H^s_{\gamma\lambda\mu}(0)) \cup
\sigma(H^a_{\lambda\mu}(0)).
\end{equation}

The first purpose of this work is to find the exact number and location of the eigenvalues of the operator $H^{a}_{\lambda\mu}(0)$ lying outside the essential spectrum.
Applying this we find for all $K\in\T^2$  sharp estimates for the number of eigenvalues of the operator $H_{\gamma\lambda\mu}(K)$.

To do these we investigate the Fredholm determinant    $\Delta^a_{\lambda\mu}(z)$ associated to the discrete Schr\"odinger operator $ H^{a}_{\lambda\mu}(0)$. It is well known that there is a one-to-one mapping between the sets of eigenvalues of the operator $H^a_{\lambda\mu}(0)$ and the zeros of $\Delta^a_{\lambda\mu}(z)$ (see \cite{LKhKh:2021}).

Note that the number of zeros of the determinant $\Delta^a_{\lambda\mu}(z)$ lying below (resp.  above) the essential spectrum changes if and only if the coefficient $C^{-}(\lambda, \mu)$ (resp. $C^{+} ( \lambda, \mu)$) of the asymptotics of  $\Delta^a_{\lambda\mu} (z)$ vanishes as $z$ approaches the lower (resp. upper) edge of the essential spectrum (see Lemma \ref{lem:asymp_det}).

Using this property, we establish a partition of the two-dimensional $(\lambda,\mu)$-space into three disjoint connected components by means of curves $C^{-}(\lambda, \mu)=0$ or
$C^{+}(\lambda,\mu)=0$. This allows us to prove that the number of zeros of the determinant $\Delta^a_{\lambda\mu}(z)$ is constant in each connected component.

In \cite{LKhKh:2021,LO'zdemir:2016,BKhL:2022,LKh:2022}, it was  studied the Schr\"odinger operators on the lattice $\Z^d,\,d=1,2$, associated to a system of two bosons with the zero-range on one site interaction ($\lambda\in \R$)  and  interaction on the nearest neighbouring sites ($\mu\in \R$).

For this, we apply the results obtained for the operator $H_{\gamma\lambda\mu}(0)$ and the nontrivial dependence of the dispersion relation $\cE_K$ on the (quasi)momentum $K\in\T^2$.

In \cite{LKhKh:2021} the restriction of the operator $H_{\lambda\mu}(0)$ onto the subspace $L^{2,e,a}(\T^2)$ can have at most one  eigenvalues, lying as below the essential spectrum, as well as above it.

\black
The paper is organized as follows. In Sec. \ref{sec:hamiltonian}
we introduce the two-particle Hamiltonian in the position and
(quasi)momentum representations. Sec. \ref{sec:main_results} contains
statements of our main results. The proofs are presented in Sec. \ref{sec:proofs}.
\black

\section{The two-particle Hamiltonians of a system of two-particles on lattices} \label{sec:hamiltonian}

\subsection{The two-particle Hamiltonian:  the position-space representation}

Let $\mathbb{Z}^2$ be the two dimensional lattice and let
$\ell^{2,s}(\mathbb{Z}^2\times\Z^2)$ be the Hilbert space of square-summable
symmetric functions on $\Z^2\times\Z^2.$

The two particle
Hamiltonian $\hat \H_{\gamma\lambda\mu},$ associated to a system of two
bosons interacting via zero-range and first and second nearest-neighbor potential
$\hat v_{\gamma\lambda\mu}$,   in the position space
representation,
 is a bounded self-adjoint operator
acting in $\ell^{2,s}(\Z^2\times\Z^2)$ as
\begin{equation*}\label{two_total}
\hat \H_{\gamma\lambda\mu}=\hat \H_0 + \hat \V_{\gamma\lambda\mu},\qquad
\gamma,\lambda,\mu\in\R.
\end{equation*}
Here the free Hamiltonian   $\hat \H_0$
 of a system of two identical particles
(bosons) is a bounded self--adjoint operator acting in
$\ell^{2,s}(\Z^2\times\Z^2)$ as
\begin{equation*}
\hat \H_0 \hat f(x,y)= \sum_{n\in\Z^2} \hat \epsilon(x-n) \hat
f(n,y) + \sum\limits_{n\in\Z^2} \hat \epsilon(y-n) \hat f(x,n),
\end{equation*}
where
$$
\hat \epsilon(s) =
\begin{cases}
2 & \text{if $|s|=0,$}\\
-\frac{1}2 & \text{if $|s|=1,$}\\
0 & \text{if $|s|>1,$}
\end{cases}
$$
and $|s|=|s_1|+|s_2|$ for $s=(s_1,s_2)\in \Z^2.$

The interaction $\hat \V_{\gamma\lambda\mu}$ is the multiplication
operator
\begin{equation*}\label{interaction}
\hat \V_{\gamma\lambda\mu} \hat f (x,y) = \hat v_{\gamma\lambda\mu}(x-y) \hat
f(x,y),
\end{equation*}
given by the function
$$
\hat{v}_{\gamma\lambda\mu}(x) =
\begin{cases}
\gamma & |x| = 0,\\
\frac{\lambda}{2} & |x| = 1,\\
\frac{\mu}{2} & |x| = 2,\\
0 & |x|>2.
\end{cases}
$$

\subsection{The two-particle Hamiltonian: the quasimomentum-space representation}

Let $\T^2=( \R /2\pi \Z)^2 \equiv [-\pi,\pi)^2$ be the two
dimensional torus, the Pontryagin dual group of $\Z^2$, equipped
with the Haar measure $\d p.$ Let $L^{2,s}(\T^2\times\T^2)$ be
the Hilbert space of square-integrable symmetric functions on
$\T^2\times\T^2.$ Let $\cF:\ell^2(\Z^2)\rightarrow L^2(\T^2)$ be the
standard Fourier transform
$$
\cF \hat f(p)=\frac{1}{2\pi} \sum_{x\in\Z^2} \hat f(x) e^{ip\cdot
x},
$$
where $p\cdot x: = p_1x_1+p_2x_2$ for $p=(p_1,p_2)\in\T^2$ and
$x=(x_1,x_2)\in\Z^2.$

In the quasimomentum-space representation the two-particle
Hamiltonian acts in $L^{2,s}(\T^2\times \T^2)$ as
$$
\H_{\gamma\lambda\mu}:=(\cF\otimes\cF) \hat
\H_{\gamma\lambda\mu}(\cF\otimes\cF)^*:=\H_0 + \V_{\gamma\lambda\mu}.
$$
Here the free Hamiltonian  $ \H_0=(\cF \otimes \cF) \hat \H_0
(\cF\otimes \cF)^*$  is
the multiplication operator:
$$
\H_0 f(p,q) = [\epsilon(p) + \epsilon(q)]f(p,q),
$$
where
$$
\epsilon(p) := \sum\limits_{i=1}^2 \big(1-\cos p_i),\quad
p=(p_1,p_2)\in \T^2
$$
is the \emph{dispersion relation} of a single boson.
The interaction  $\V_{\gamma\lambda\mu}=(\cF \otimes
\cF)\hat\V_{\gamma\lambda\mu} (\cF\otimes\cF)^*$ is  (partial) integral
operator
$$
\V_{\gamma\lambda\mu} f(p,q) = \frac{1}{(2\pi)^2}\int_{\T^2}
v_{\gamma\lambda\mu}(p-u) f(u,p+q-u)\d u,
$$
where
$$
v_{\gamma\lambda\mu}(p)=
\gamma+\lambda\sum_{i=1}^2\cos p_i
+ \mu \sum_{i=1}^2\cos 2p_i+2\mu\cos p_1\cos p_2,\quad
p=(p_1,p_2)\in \T^2.
$$

\subsection{The Floquet-Bloch decomposition of $\H_{\gamma\lambda\mu}$ and discrete Schr\"odinger operators}\label{subsec:von_neuman}

Since $\hat H_{\gamma\lambda\mu}$ commutes with the representation of the
discrete group $\Z^2$ by shift operators on the lattice, we can
decompose the space $L^{2,s}(\T^2\times\T^2)$ and $\H_{\gamma\lambda\mu}$
into the von Neumann direct integral as
\begin{equation}\label{hilbertfiber}
L^{2,s}(\T^2\times \T^2)\simeq \int\limits_{K\in \T^2} \oplus
L^{2,e}(\T^2)\,\d K
\end{equation}
and
\begin{equation}\label{fiber}
\H_{\gamma\lambda\mu} \simeq \int\limits_{K\in \T^2} \oplus
H_{\gamma\lambda\mu}(K)\,\d K,
\end{equation}
where $L^{2,e}(\T^2)$ is the Hilbert space of square-integrable even
functions on $\T^2$ (see, e.g., \cite{ALMM:2006}).
The fiber operator $H_{\gamma\lambda\mu}(K),$ $K\in\T^2,$ is a
self-adjoint operator  defined in $L^{2,e}(\T^2)$ as
\begin{equation*}
H_{\gamma\lambda\mu}(K) := H_0(K) + V_{\gamma\lambda\mu},
\end{equation*}
where the unperturbed operator $H_0(K)$ is the multiplication
operator by the function
$$
\cE_K(p):= 2 \sum_{i=1}^2\Big(1-\cos\tfrac{K_i}2\,\cos p_i\Big),
$$
and the perturbation  $V_{\gamma\lambda\mu}$ is defined as
\begin{align}\label{moment_poten}
V_{\gamma\lambda\mu}f(p)
&= \frac{\gamma}{4\pi^2}\int\limits_{\T^2}f(q)\d q + \frac{\lambda}{4\pi^2}\sum\limits_{i=1}^2 \cos p_i \int\limits_{\T^2} \cos q_i f(q) \d q  + \frac{\mu}{4\pi^2} \sum\limits_{i=1}^2\cos 2p_i
\int\limits_{\T^2}\cos 2q_if(q)\d q \\ \nonumber
&+\frac{\mu}{2\pi^2}\cos p_1 \cos p_2
\int\limits_{\T^2}\cos q_1 \cos q_2 f(q)\d q
+ \frac{\mu}{2\pi^2}\sin p_1 \sin p_2
\int\limits_{\T^2}\sin q_1 \sin q_2 f(q)\d q.
\end{align}
The parameter $K\in\T^2$ is usually called the
\emph{two-particle quasi-momentum} and the fiber $H_{\gamma\lambda\mu}(K)$
is called the \emph{Schr\"odinger operator} associated to
the two-particle Hamiltonian $\hat \H_{\gamma\lambda\mu}.$

\subsection{The essential spectrum of lattice Schr\"odinger operators} \label{subsec:ess_spec}

Depending on $\gamma,\lambda, \mu \in \R$ the rank of $V_{\gamma\lambda\mu}$  varies but never exceeds seven. Hence, by the classical Weyl theorem for any $K\in\T^2$ the essential spectrum
$\sigma_{\mathrm{ess}}(H_{\gamma\lambda\mu}(K))$  of $H_{\gamma\lambda\mu}(K)$ coincides with the spectrum of $H_0(K):$
\begin{equation}\label{eq:ess_spec}
\sigma_{\mathrm{ess}}(H_{\gamma\lambda\mu}(K)) = \sigma(H_0(K)) =
[\cE_{\min}(K),\cE_{\max}(K)],
\end{equation}
where
\begin{align*}
\cE_{\min}(K):= & \min_{p\in  \T ^2}\,\cE_K(p) = 2\sum\limits_{i=1}^2\Big(1-\cos \tfrac{K_{i}}2\Big)\geq \cE_{\min}(0)=0,\\
\cE_{\max}(K):= & \max_{p\in  \T ^2}\,\cE_K(p)=
2\sum\limits_{i=1}^2\Big(1+\cos \tfrac{K_{i}}2\Big)\leq
\cE_{\max}(0)=8.
\end{align*}

\subsection{The restriction of  $H_{\gamma\lambda\mu}(0)$ onto the space of antisymmetric functions in $L^{2,e}(\T^2)$ }

Let $L^{2,e,s}(\T^2)$ and $L^{2,e,a}(\T^2)$ subspaces are defined in \eqref{subsp1} and \eqref{subsp2}, respectively.

We recall that
$$
L^{2,e}(\T^2)= L^{2,e,s}(\T^2)\oplus L^{2,e,a}(\T^2).
$$
Since  $\varepsilon$ is even and symmetric,  the spaces $L^{2,e,s}(\T^2)$ and $L^{2,e,a}(\T^2)$  are invariant  subspaces of  the operator  $H_{0}(0)$.

The equality
\begin{align*}
2\cos x \cos y &+ 2\cos z \cos t
=(\cos x + \cos z)(\cos y + \cos t)
 + (\cos x - \cos
z)(\cos y - \cos t)
\end{align*}
gives that the operator $V_{\gamma\lambda\mu}$, in \eqref{moment_poten}, can be written as:
\begin{align}\label{moment_poten_2}
V_{\gamma\lambda\mu}f(p)
&=\frac{\gamma }{4\pi^2}\int_{\T^2} f(q)\,\d q  + \frac{\lambda }{8\pi^2}(\cos p_1 + \cos p_2) \int_{\T^2} (\cos q_1
+ \cos q_2)f(q)\,\d q\\ \nonumber
&+ \frac{\lambda }{8\pi^2}(\cos p_1 - \cos p_2) \int_{\T^2} (\cos q_1
- \cos q_2)f(q)\,\d q\\ \nonumber
&+\frac{\mu }{8\pi^2}(\cos 2p_1 + \cos 2p_2) \int_{\T^2} (\cos 2q_1
+ \cos 2q_2)f(q)\,\d q\\ \nonumber
&+\frac{\mu }{8\pi^2}(\cos 2p_1 - \cos 2p_2) \int_{\T^2} (\cos 2q_1
- \cos 2q_2)f(q)\,\d q\\ \nonumber
& +\frac{\mu }{2\pi^2}\cos p_1\cos p_2 \int_{\T^2} \cos q_1\cos q_2f(q)\,\d q\\ \nonumber
&+\frac{\mu }{2\pi^2}\sin p_1\sin p_2 \int_{\T^2} \sin q_1\sin q_2f(q)\,\d q .
\end{align}
The equality \eqref{moment_poten_2} yields that the subspaces $L^{2,e,s}(\T^2)$ and $L^{2,e,a}(\T^2)$  are invariant  with respect to    $V_{\gamma\lambda\mu}$. Thus, the subspaces $L^{2,e,s}(\T^2)$ and $L^{2,e,a}(\T^2)$  are invariant subspaces of the operator $H_{\gamma\lambda\mu}(0)$ and  reduces it. Therefore
\begin{equation}\label{separation_K0}
\sigma(H_{\gamma\lambda\mu}(0)) = \sigma(H^s_{\gamma\lambda\mu}(0)) \cup
\sigma(H^a_{\gamma\lambda\mu}(0)),
\end{equation}
where $H^s_{\gamma\lambda\mu}(0)$ and $H^a_{\gamma\lambda\mu}(0)$ are the restrictions of $H_{\gamma\lambda\mu}(0)$ onto $L^{2,e,s}(\T^2)$
and $L^{2,e,a}(\T^2).$  Using equality \eqref{moment_poten_2}, for any $f\in L^{2,e,a}(\T^2)$, we arrive that
\begin{align}\label{moment_a}
V_{\gamma \lambda \mu }^{a}f(p) =&\frac{\lambda }{8\pi^2}(\cos p_1 - \cos p_2)
\int_{\T^2} (\cos q_1 - \cos q_2)f(q)\,\d q+ \\ \nonumber
& \frac{\mu }{8\pi^2}(\cos 2p_1 - \cos 2p_2)
\int_{\T^2} (\cos 2q_1 - \cos 2q_2)f(q)\,\d q \in L^{2,e,a}(\T^2)
\end{align}
and hence, $H_{\gamma\lambda\mu}^{a}(0) = H_0(0) + V^{a}_{\gamma\lambda\mu}$.
So, we can conclude that the restriction $H^{a}_{\gamma\lambda\mu}(0)$ of the operator $H_{\gamma  \lambda  \mu}(0)$ onto the Hilbert space $L^{2,e,a}(\T^2)$ is depending only on the parameters $\lambda$ and  $\mu$, which is simpler to investigate. Therefore, here and everywhere below we use in place of $H_{\gamma\lambda\mu}^{a}(0)$ and  $V^{a}_{\gamma\lambda\mu}$ the notations $H_{\lambda\mu}^{a}(0)$ and $V^{a}_{\lambda\mu}$, respectively.

Recall that the main purpose of this article is to study the number and location of the eigenvalues of the operator $H^{a}_{\lambda\mu}(0)$ lying outside the essential spectrum.
Further, applying this we find the number of eigenvalues of the operator $H_{\gamma\lambda\mu}(K),$ for all $K\in\T^2.$

\section{Main results}\label{sec:main_results}

First we study the number of eigenvalues of the operator $H^{a}_{\lambda\mu}(0)$.
To do this, we introduce the following functions \begin{align}\label{polynomial:C+}
C^{+}(\lambda,\mu):=1 -\tfrac{(4-\pi)}{\pi}\lambda - \tfrac{2(32-9\pi)}{\pi}\mu+\tfrac{(32-9\pi)}{4\pi}\lambda \mu
\end{align}
and
\begin{align}\label{polynomial:C-}
C^{-}(\lambda,\mu):=1+\tfrac{(4-\pi)}{\pi}\lambda + \tfrac{2(32-9\pi)}{2\pi}\mu+\tfrac{(32-9\pi)}{4\pi}\lambda \mu.
\end{align}

\black
Then we partition $(\lambda,\mu)$-plane
of interaction magnitudes into connected
components by means of the hyperbolas $C^{+}(\lambda,\mu)=0$ and $C^{-}(\lambda,\mu)=0$.

Denote by
\begin{equation}\label{root_0}
\mu_0=\tfrac{4(4-\pi)}{32-9\pi}.
\end{equation}
By using the number $\mu_0$ we introduce the following functions on $\R$:
\begin{equation}
\label{function_lambda}
\lambda^{\pm}(\mu)=\frac{4}{\mu \mp \mu_0} \pm 8.
\end{equation}

\begin{lemma}\label{lem:function_lambda}
The set of points of $\mathbb{R}^2$ satisfying the equation $C^{\pm}(\lambda,\mu)= 0 $   coincides with the graph of the respective function $\lambda=\lambda^{\pm}(\mu)$.
\end{lemma}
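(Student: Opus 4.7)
The plan is to treat $C^{\pm}(\lambda,\mu)=0$ as a linear equation in $\lambda$ with $\mu$-dependent coefficients and solve it explicitly. For $C^{+}$, collecting the coefficient of $\lambda$ in \eqref{polynomial:C+} gives
\begin{equation*}
C^{+}(\lambda,\mu) = \Bigl[1 - \tfrac{2(32-9\pi)}{\pi}\mu\Bigr] + \tfrac{32-9\pi}{4\pi}\bigl(\mu - \mu_0\bigr)\lambda,
\end{equation*}
where the factorisation of the $\lambda$-coefficient uses the defining identity $(32-9\pi)\mu_0 = 4(4-\pi)$ from \eqref{root_0}. For $\mu \neq \mu_0$ this uniquely determines $\lambda$, and a short rearrangement of the resulting rational function (peeling off a constant via polynomial division in $\mu-\mu_0$) collapses the answer to $\lambda = 8 + \tfrac{4}{\mu - \mu_0} = \lambda^{+}(\mu)$; the converse direction is the same calculation run backwards.

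The argument for $C^{-}$ is entirely parallel, and I would record it cleanly by observing the symmetry $C^{-}(\lambda,\mu) = C^{+}(-\lambda,-\mu)$ between \eqref{polynomial:C+} and \eqref{polynomial:C-}. This identity immediately yields that $C^{-}(\lambda,\mu) = 0$ if and only if $-\lambda = \lambda^{+}(-\mu)$, which on rearrangement gives $\lambda = -8 + \tfrac{4}{\mu + \mu_0} = \lambda^{-}(\mu)$.

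It remains to dispose of the excluded value $\mu = \mu_0$ (and, symmetrically, $\mu = -\mu_0$), where the formula \eqref{function_lambda} is not defined. At $\mu = \mu_0$ the coefficient of $\lambda$ in $C^{+}$ vanishes by the very construction of $\mu_0$, while the remaining constant reduces to $1 - \tfrac{8(4-\pi)}{\pi} = \tfrac{9\pi - 32}{\pi}$, which is nonzero since $9\pi < 32$. Hence the vertical line $\mu = \mu_0$ carries no zero of $C^{+}$, and so the graph of $\lambda^{+}$ is exactly the zero set of $C^{+}$; the analogous statement for $C^{-}$ at $\mu = -\mu_0$ follows by the symmetry above. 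The only thing resembling an obstacle is careful bookkeeping of the constants $4-\pi$ and $32-9\pi$ — beyond that, the lemma is a direct algebraic reduction.
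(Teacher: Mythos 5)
Your proof is correct and follows essentially the same route as the paper: both arguments exploit the identity $(32-9\pi)\mu_0=4(4-\pi)$ to factor the $\lambda$-dependence (the paper packages this as the decomposition $C^{+}(\lambda,\mu)=\tfrac{32-9\pi}{4\pi}\bigl[(\mu-\mu_0)(\lambda-8)-4\bigr]$), solve for $\lambda$ when $\mu\neq\mu_0$, and observe that the line $\mu=\mu_0$ carries no zeros because the residual constant equals $\tfrac{9\pi-32}{\pi}\neq 0$. The one caveat is that your symmetry $C^{-}(\lambda,\mu)=C^{+}(-\lambda,-\mu)$ holds for the intended $C^{-}$ but not for \eqref{polynomial:C-} as printed, whose $\mu$-coefficient $\tfrac{2(32-9\pi)}{2\pi}$ is evidently a typo for $\tfrac{2(32-9\pi)}{\pi}$ (with the literal printed coefficient the zero set of $C^{-}$ would not be the graph of $\lambda^{-}$ at all), so your reading is the consistent one and the argument stands.
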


\begin{proof}
For convenience we only prove  for the function $C^{+}(\cdot,\cdot)$. For the function $C^{-}(\cdot,\cdot)$ can be proven similarly.

We observe that $C^{+}(\lambda,\mu)$ can be decomposed as follows:
\begin{equation}\label{decompos_C+}
C^{+}(\lambda,\mu)=\frac{32-9\pi}{4\pi}\big[(\mu-\mu_0)(\lambda-8)-4\big].
\end{equation}
Therefore $C^{+}(\lambda,\mu)=0$ implies the inequality $\mu \neq \mu_0$, i.e., the following system of equation
\begin{align}\label{system1}
\begin{cases}
&C^{+}(\lambda,\mu)= 0 \\
&\mu= \mu_0
\end{cases}
\end{align}
has no solutions.
Otherwise \eqref{system1} gives the equality $C^{+}(\lambda,\mu_0)=0$, but the equality \eqref{decompos_C+} implies that $C^{+}(\lambda,\mu_0)=-\frac{32-9\pi}{\pi}\neq 0.$ Thus, \eqref{system1} has no solutions.
\end{proof}

Since \eqref{system1} has no solutions the straight line $\mu= \mu_0$
(resp. $\mu=-\mu_0$)
separate the graph of the function  $\lambda^{+}$
(resp. $\lambda^{-}$) \eqref{function_lambda}
 into  two (different) continuous curves $\{\tau^{+}_1, \tau^{+}_2\}$,
 (resp. $\{\tau^{-}_1, \tau^{-}_2\}$ ),
 where
\begin{align*}
&\tau^{+}_{1}=\{(\lambda,\mu)\in\R^2:
\lambda=\lambda^+(\mu),\,\,\mu<\mu_0\},\\
&\tau^{+}_{2}=\{(\lambda,\mu)\in\R^2:
\lambda=\lambda^+(\mu),\,\,\mu>\mu_0\}
\end{align*}
and
\begin{align*}
&\tau^{-}_{1}=\{(\lambda,\mu)\in\R^2:
\lambda=\lambda^-(\mu),\,\,\mu>-\mu_0\},\\
&\tau^{-}_{2}=\{(\lambda,\mu)\in\R^2:
\lambda=\lambda^-(\mu),\,\,\mu<-\mu_0\}.
\end{align*}

\begin{center}
\includegraphics[scale=0.28]{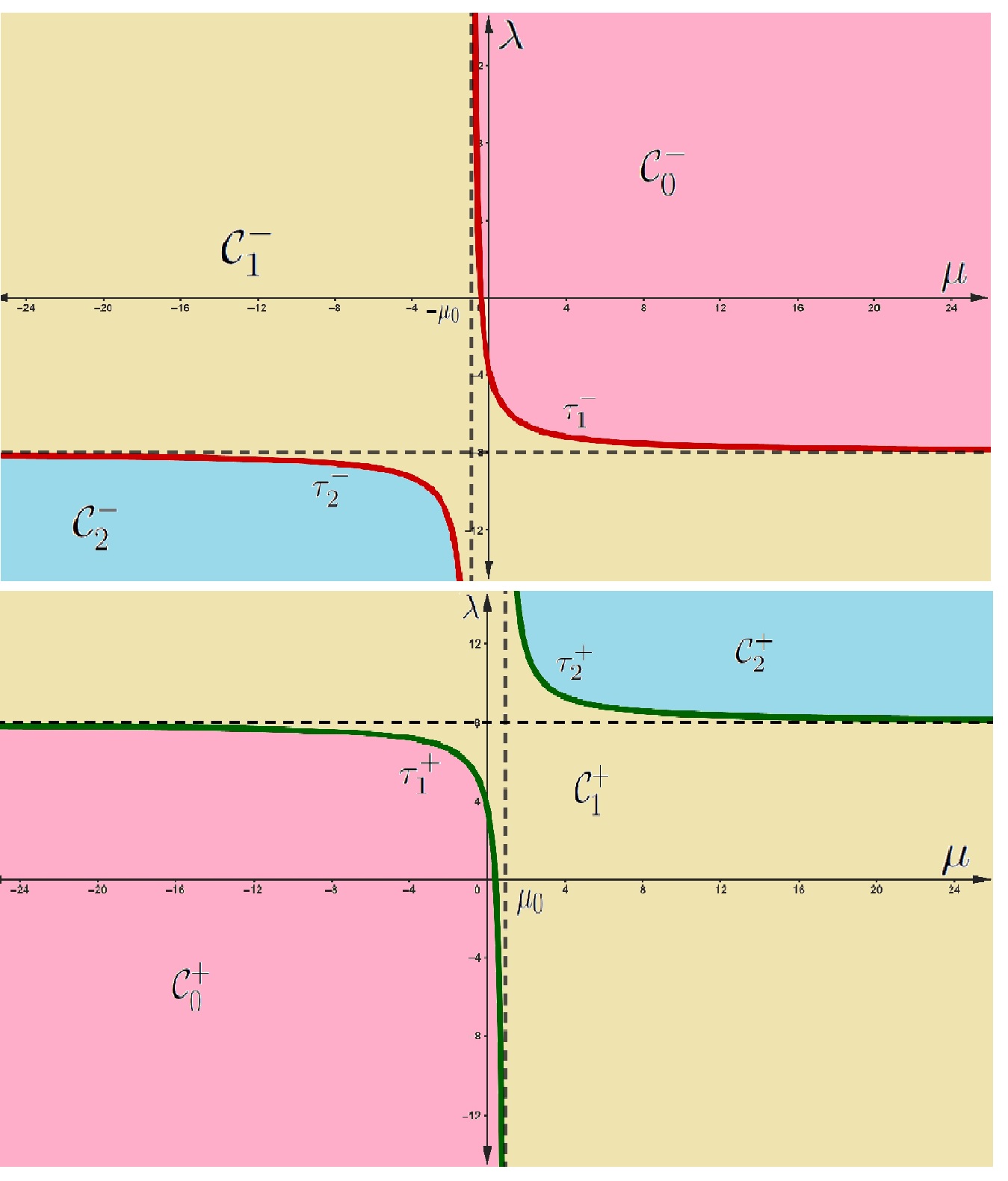}\\
\small{\textbf{Figure\, 3.1.}}
\end{center}

The curves $\{\tau^{\pm}_1,\tau^{\pm}_2\}$ divide the $(\lambda,\mu)$- plane  into the  unbounded connected components $\cC^{\pm}_{0},\cC^{\pm}_{1},\cC^{\pm}_{2}$, where
\begin{align}\label{six_sets}
\cC_0^+:= &\Big\{(\lambda,\mu)\in\R^2:\,\,\lambda<\lambda^{+}(\mu),\,\, \mu<\mu_0 \Big\},\\ \nonumber
\cC_1^+:=&\Big\{(\lambda,\mu)\in\R^2:\,\,\lambda>\lambda^{+}(\mu),\,\, \mu<\mu_0 \Big\}\cup
\Big\{(\lambda,\mu)\in\R^2:\,\,\lambda<\lambda^{+}(\mu),\,\, \mu>\mu_0 \Big\}\\ \nonumber
&\cup
\Big\{(\lambda,\mu)\in\R^2:\,\, \mu=\mu_0 \Big\} ,\\  \nonumber
\cC_2^+:= &\Big\{(\lambda,\mu)\in\R^2:\,\,\lambda>\lambda^{+}(\mu),\,\, \mu<\mu_0 \Big\}
\end{align}
and
\begin{align}\label{six_sets_2}
\cC_0^-:= &\Big\{(\lambda,\mu)\in\R^2:\,\,\lambda>\lambda^{-}(\mu),\,\, \mu>-\mu_0 \Big\},\\ \nonumber
\cC_1^-:=&\Big\{(\lambda,\mu)\in\R^2:\,\,\lambda<\lambda^{-}(\mu),\,\, \mu>-\mu_0 \Big\}\cup
\Big\{(\lambda,\mu)\in\R^2:\,\,\lambda>\lambda^{-}(\mu),\,\, \mu<-\mu_0 \Big\}\\ \nonumber
&\cup
\Big\{(\lambda,\mu)\in\R^2:\,\, \mu=-\mu_0 \Big\} ,\\  \nonumber
\cC_2^-:= &\Big\{(\lambda,\mu)\in\R^2:\,\,\lambda<\lambda^{-}(\mu),\,\, \mu<-\mu_0 \Big\}
\end{align}

The following theorem provides that in each of the above connected components $\cC^{+}_{k}$
(resp. $\cC^{-}_{k}$), $k=0,1,2$ the number of eigenvalues of the operator
$H^{a}_{\lambda\mu}(0)$ lying above (resp. below) its
essential spectrum, remains constant.

\begin{theorem}\label{teo:constant-eea}

Let $\cC^+$ be one of the above connected components
$\cC^+_k$, $k=0,1,2$, of the partition of the
$(\lambda,\mu)$-plane. Then for any $(\lambda,\mu)\in\cC^+$ the
number $n_+({H}^{a}_{\lambda\mu}(0))$ of eigenvalues of
$H^{a}_{\lambda\mu}(0)$ lying above the essential spectrum
$\sigma_\mathrm{ess}\bigl(H^{a}_{\lambda\mu}(0)\bigr)$ remains constant.

Analogously, let $\cC^-$ be one of the above connected
components $\cC^-_k$, $k=0,1,2$, of the partition of the
$(\lambda,\mu)$-plane. Then for any $(\lambda,\mu)\in\cC^-$ the
number $n_-({H}^{a}_{\lambda\mu}(0))$ of eigenvalues of
$H^{a}_{\lambda\mu}(0)$ lying below the essential spectrum
$\sigma_\mathrm{ess}\bigl(H_{\lambda\mu}^{a}(0)\bigr)$ remains constant.
\end{theorem}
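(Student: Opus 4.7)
The plan is to translate eigenvalues of $H^{a}_{\lambda\mu}(0)$ outside the essential spectrum into zeros of the Fredholm determinant $\Delta^{a}_{\lambda\mu}(z)$, and then exploit joint continuity in $(z,\lambda,\mu)$ together with the boundary asymptotics controlled by $C^{\pm}(\lambda,\mu)$. By the Birman--Schwinger/Fredholm correspondence quoted in the Introduction (cf.\ \cite{LKhKh:2021}), the number $n_{+}(H^{a}_{\lambda\mu}(0))$ (resp.\ $n_{-}(H^{a}_{\lambda\mu}(0))$) equals the number of zeros, counted with multiplicity, of $z\mapsto \Delta^{a}_{\lambda\mu}(z)$ on $(\cE_{\max}(0),+\infty)=(8,\infty)$ (resp.\ on $(-\infty,0)$). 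Since $V^{a}_{\lambda\mu}$ is of finite rank and its matrix entries depend polynomially on $(\lambda,\mu)$, the determinant $\Delta^{a}_{\lambda\mu}(z)$ is jointly continuous on $\bigl((\R\setminus[0,8])\cup\{\pm\infty\}\bigr)\times\R^2$ and real-analytic in $z$ on each of the two half-lines.

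Next I would rule out the two ways in which a zero can be created or destroyed inside a connected component. Because $H^{a}_{\lambda\mu}(0)$ is bounded uniformly in $(\lambda,\mu)$ on compact sets, and $\Delta^{a}_{\lambda\mu}(z)\to 1$ as $|z|\to\infty$, no zero can escape to or come in from infinity along a continuous path in $\R^2$. Hence a zero can only appear or disappear by entering or leaving the essential spectrum $[0,8]$ through its endpoints. By Lemma~\ref{lem:asymp_det}, however, the leading behaviour of $\Delta^{a}_{\lambda\mu}(z)$ as $z\downarrow 0$ (resp.\ $z\uparrow 8$) is governed precisely by $C^{-}(\lambda,\mu)$ (resp.\ $C^{+}(\lambda,\mu)$): as long as $C^{+}(\lambda,\mu)\neq 0$, the determinant has a definite nonzero limit at the upper edge, so zeros cannot cross into or out of $(8,\infty)$; similarly for $C^{-}(\lambda,\mu)\neq 0$ and the lower edge.

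With this in hand, the conclusion follows from a standard continuity-and-connectedness argument. Fix one of the connected components $\cC^{+}_{k}$; by Lemma~\ref{lem:function_lambda} we have $C^{+}(\lambda,\mu)\neq 0$ throughout $\cC^{+}_{k}$. For any $(\lambda_0,\mu_0)\in\cC^{+}_{k}$ with $m=n_{+}(H^{a}_{\lambda_0\mu_0}(0))$ zeros $z_1<\dots<z_m$ of $\Delta^{a}_{\lambda_0\mu_0}$ in $(8,\infty)$, Hurwitz's theorem (or a direct implicit-function/Rouch\'e argument around each $z_j$, combined with the boundary control above and decay at $+\infty$) shows that the number of zeros in $(8,\infty)$ is locally constant near $(\lambda_0,\mu_0)$. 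Thus $\{(\lambda,\mu)\in\cC^{+}_{k}:n_{+}(H^{a}_{\lambda\mu}(0))=m\}$ is open, and since the same argument with any other value gives another open set, the partition of $\cC^{+}_{k}$ by the values of $n_{+}$ consists of open sets; connectedness of $\cC^{+}_{k}$ forces $n_{+}$ to be constant there. An identical argument, using $C^{-}$ and the lower edge $z=0$, works for $\cC^{-}_{k}$.

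The main obstacle, and the only place where genuine work is needed, is the simultaneous treatment of a zero approaching the edge of the essential spectrum: one must verify that the asymptotics in Lemma~\ref{lem:asymp_det} are uniform in $(\lambda,\mu)$ on compact subsets of $\{C^{+}\neq 0\}$ (resp.\ $\{C^{-}\neq 0\}$), so that Rouch\'e or Hurwitz applies in a neighborhood of the boundary $z=8$ (resp.\ $z=0$) and no zero can sneak in or out unnoticed. Away from the edges, the local constancy of the zero count is immediate from analyticity in $z$ and continuity in $(\lambda,\mu)$.
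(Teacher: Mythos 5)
Your proposal is correct and follows essentially the same route as the paper: both reduce $n_{\pm}(H^{a}_{\lambda\mu}(0))$ to counting zeros of $\Delta^{a}_{\lambda\mu}$ on $(8,+\infty)$ resp.\ $(-\infty,0)$ via Lemma~\ref{lem:det_zeros_vs_eigen}, and then run a Rouch\'e/Hurwitz argument in $z$ combined with connectedness in $(\lambda,\mu)$. If anything you are more explicit than the paper's own proof, which carries out the Rouch\'e step only on a compact interval $(B_1,B_2)\subset(8,+\infty)$ and never spells out why zeros cannot enter or leave through the edge $z=8$ or through $+\infty$; the uniformity you flag as the remaining obstacle is immediate, since $\Delta^{a}_{\lambda\mu}(z)=(1+\lambda a_{11}(z))(1+\mu a_{22}(z))-\lambda\mu\, a_{12}(z)^2$ is a polynomial in $(\lambda,\mu)$ whose coefficients $a_{ij}(z)$ extend continuously to the edge by Proposition~\ref{prop:asymp_functions}, so $\Delta^{a}_{\lambda\mu}(z)\to C^{+}(\lambda,\mu)$ locally uniformly on compact subsets of $\{C^{+}\neq 0\}$.
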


The result below concerns the number and location of eigenvalues ({\it with antisymmetric eigenfunctions}) of the
Hamiltonian $H^{a}_{\lambda\mu}(0)$ for various $\lambda,\mu\in\mathbb{R}$.

\begin{theorem}\label{teo:sharpness}
Let $\lambda,\mu\in\mathbb{R}$. The following relations are true:

\begin{itemize}
\item[(a)] if  $(\lambda,\mu)\in \cC_0^+$ resp.  $(\lambda,\mu)\in
\cC_0^-,$  then $H^{a}_{\lambda\mu}(0)$ hasn't any bound states above resp. below the essential spectrum.

\item[(b)] if  $(\lambda,\mu)\in \cC_1^+$ resp.
$(\lambda,\mu)\in \cC_1^-,$  then
$H^{a}_{\lambda\mu}(0)$ has a unique antisymmetric bound state with energy lying above resp.
below the essential spectrum.

\item[(c)] if  $(\lambda,\mu)\in \cC_2^+$ resp.   $(\lambda,\mu)\in \cC_2^-,$
 then $H^{a}_{\lambda\mu}(0)$
has  exactly two antisymmetric bound states with energy lying above resp. below the essential spectrum;
\end{itemize}
\end{theorem}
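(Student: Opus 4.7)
The plan is to combine Theorem \ref{teo:constant-eea} (constancy of the eigenvalue count on each connected component) with an explicit count at one convenient representative in each of the six regions $\cC_0^\pm,\cC_1^\pm,\cC_2^\pm$, using the Fredholm determinant $\Delta^a_{\lambda\mu}(z)$ and the asymptotic information supplied by the coefficients $C^\pm(\lambda,\mu)$ through Lemma \ref{lem:asymp_det}.

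First I would record, directly from the formula \eqref{moment_a}, that $V^a_{\lambda\mu}$ is a sum of at most two rank-one operators (with form factors $\cos p_1-\cos p_2$ and $\cos 2p_1-\cos 2p_2$). By the min-max principle this already yields the universal upper bound of two eigenvalues above $\sigma_{\mathrm{ess}}$ and two below; equivalently, $\Delta^a_{\lambda\mu}(z)$ is a $2\times 2$ Fredholm determinant whose zeros in $\R\setminus[0,8]$ are in bijection with the corresponding eigenvalues. Next, claim (a) is settled by taking $(\lambda,\mu)=(0,0)\in\cC_0^+\cap\cC_0^-$: here $H^a_{00}(0)=H_0(0)|_{L^{2,e,a}(\T^2)}$ has purely absolutely continuous spectrum $[0,8]$, so there are no eigenvalues outside, and Theorem \ref{teo:constant-eea} propagates this count throughout $\cC_0^\pm$.

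To establish (b) and (c), I would choose axial representatives. On the line $\mu=0$ the perturbation becomes rank one and
\begin{equation*}
\Delta^a_{\lambda 0}(z)=1-\frac{\lambda}{8\pi^2}\int_{\T^2}\frac{(\cos q_1-\cos q_2)^2}{z-\cE_0(q)}\,\d q,
\end{equation*}
whose zeros outside $[0,8]$ are easily counted by monotonicity in $z$. One verifies, using Lemma \ref{lem:asymp_det}, that the birth of a zero at $z=8$ (resp.\ $z=0$) occurs exactly at the critical coupling $\lambda=\lambda^+(0)$ (resp.\ $\lambda=\lambda^-(0)$), producing a unique eigenvalue above (resp.\ below) $\sigma_{\mathrm{ess}}$ for representatives in $\cC_1^+$ (resp.\ $\cC_1^-$). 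An analogous rank-one analysis along the line $\lambda=0$ covers the other branch of $\cC_1^\pm$. For $\cC_2^\pm$, I would choose a representative with both $\lambda$ and $\mu$ of large magnitude and the appropriate signs: each of the two rank-one summands then independently contributes a zero of the full $2\times 2$ Fredholm determinant outside $\sigma_{\mathrm{ess}}$, and combined with the rank-two upper bound this forces the count to be exactly two.

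The main obstacle is the verification for $\cC_2^\pm$: the two form factors $\cos q_1-\cos q_2$ and $\cos 2q_1-\cos 2q_2$ are not orthogonal in $L^2(\T^2,(z-\cE_0)^{-1}\d q)$, so the off-diagonal entries of the Birman-Schwinger matrix are nonzero and one cannot simply reduce to two independent rank-one problems. I would handle this by analyzing $\Delta^a_{\lambda\mu}(z)$ at two well-chosen test energies outside $\sigma_{\mathrm{ess}}$ and invoking the intermediate value theorem, using the sign information on $C^\pm(\lambda,\mu)$ from \eqref{polynomial:C+}--\eqref{polynomial:C-} to control the boundary behavior near $z=0$ and $z=8$. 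In other words, the three-component partition $\cC_0^\pm,\cC_1^\pm,\cC_2^\pm$ tracks precisely how many zeros of $\Delta^a_{\lambda\mu}$ have emerged from the essential spectrum edge as $(\lambda,\mu)$ crosses each branch of the hyperbola $C^\pm(\lambda,\mu)=0$, and the entire theorem follows by combining this boundary bookkeeping with the constancy result of Theorem \ref{teo:constant-eea}.
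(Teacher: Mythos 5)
Your overall strategy---bijection between eigenvalues and zeros of the $2\times2$ determinant $\Delta^a_{\lambda\mu}$, the rank-two upper bound, an explicit count at a representative point, and propagation by Theorem \ref{teo:constant-eea}---is exactly the paper's route for part (a) (the paper also uses $(0,0)$ with $\Delta^a_{00}\equiv1$). For part (b) you differ slightly: the paper argues directly at an arbitrary point of $\cC_1^\pm$, using $C^{\pm}(\lambda,\mu)<0$ there together with $\lim_{z\to\pm\infty}\Delta^a_{\lambda\mu}(z)=1$ to get one sign change, and then a parity argument (a second zero would force a third, contradicting the at-most-two bound) to get uniqueness; your axial rank-one representative plus constancy is an equally valid variant, and note that each $\cC_1^\pm$ is a single connected component, so one representative suffices and no ``other branch'' needs separate treatment.

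The one place where your proposal has a real gap is part (c): you correctly identify that the off-diagonal entries prevent a reduction to two independent rank-one problems, but you only assert that ``two well-chosen test energies'' with the right signs exist, and that existence is precisely the content of the proof. The paper's resolution is to take the single interior test point $z_{11}\in(8,+\infty)$ defined as the unique zero of the monotone function $1+\lambda a_{11}(z)$ (it exists because $(\lambda,\mu)\in\cC_2^+$ forces $\lambda>\lambda^+(\mu)>8$), at which the determinant collapses to
\begin{equation*}
\Delta^a_{\lambda\mu}(z_{11})=\bigl(1+\lambda a_{11}(z_{11})\bigr)\bigl(1+\mu a_{22}(z_{11})\bigr)-\lambda\mu\, a_{12}(z_{11})^2=-\lambda\mu\, a_{12}(z_{11})^2<0,
\end{equation*}
using $\lambda,\mu>0$ and $a_{12}(z)\neq0$ on $(8,\infty)$ from Proposition \ref{prop:asymp_functions}. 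Combined with $\lim_{z\searrow8}\Delta^a_{\lambda\mu}(z)=C^+(\lambda,\mu)>0$ and $\lim_{z\to+\infty}\Delta^a_{\lambda\mu}(z)=1$, this yields two sign changes, hence at least two zeros, and the rank-two bound caps the count at two. Without exhibiting such a test point and verifying its sign, your argument for $\cC_2^\pm$ is not complete; with it, your proposal coincides with the paper's proof.
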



Let $\miqdor_+(H_{\gamma\lambda\mu}(K))$ resp.
$\miqdor_-(H_{\gamma\lambda\mu}(K))$ be the number of eigenvalues of the operator  $H_{\gamma\lambda\mu}(K)$
above resp. below the essential spectrum.
Denote by
$$\cG_{\alpha\beta}:=\cC^{-}_{\alpha} \cap \cC^{+}_{\beta},\,\,\,\alpha,\beta=0,1,2$$
 the connected components.

Next theorem  gives  estimates for  the number of eigenvalues of  the operator  $H_{\gamma\lambda\mu}(K)$, for all  $ K\in\T^2$,  depending only to parameters  $\lambda,\mu \in \R.$

\begin{theorem}\label{teo:eigenK}
Let $(\lambda,\mu)\in \cG_{\alpha\beta}$ then
for all $\alpha,\beta=0,1,2$
\begin{align*}
\qquad  n_-(H_{\gamma\lambda\mu}(K)) \ge \alpha  \quad \text{and} \quad n_+(H_{\gamma\lambda\mu}(K)) \ge \beta.
\end{align*}
\end{theorem}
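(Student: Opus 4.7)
The plan is to combine a variational comparison between the fibers $H_{\gamma\lambda\mu}(K)$ and $H_{\gamma\lambda\mu}(0)$ with the eigenvalue count of $H^{a}_{\lambda\mu}(0)$ supplied by Theorem~\ref{teo:sharpness}. The key ingredient is the pair of pointwise identities
\[
\bigl[\cE_{\max}(0) - \cE_0(p)\bigr] - \bigl[\cE_{\max}(K) - \cE_K(p)\bigr] = 2\sum_{i=1}^{2}\bigl(1-\cos\tfrac{K_i}{2}\bigr)(1+\cos p_i) \ge 0,
\]
\[
\bigl[\cE_0(p) - \cE_{\min}(0)\bigr] - \bigl[\cE_K(p) - \cE_{\min}(K)\bigr] = 2\sum_{i=1}^{2}\bigl(1-\cos\tfrac{K_i}{2}\bigr)(1-\cos p_i) \ge 0,
\]
both obtained by direct expansion of the expressions in \eqref{def:dispersion}.

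Since $V_{\gamma\lambda\mu}$ does not depend on $K$, these pointwise bounds lift to the operator inequalities
\begin{align*}
H_{\gamma\lambda\mu}(0) - \cE_{\max}(0) &\le H_{\gamma\lambda\mu}(K) - \cE_{\max}(K), \\
H_{\gamma\lambda\mu}(K) - \cE_{\min}(K) &\le H_{\gamma\lambda\mu}(0) - \cE_{\min}(0).
\end{align*}
By \eqref{eq:ess_spec}, the essential spectrum of each shifted operator in the first line has its upper edge exactly at $0$, and similarly $0$ is the lower edge for each shifted operator in the second line. Applying the min-max principle at these common edges therefore yields
\[
n_+\bigl(H_{\gamma\lambda\mu}(K)\bigr) \ge n_+\bigl(H_{\gamma\lambda\mu}(0)\bigr), \qquad n_-\bigl(H_{\gamma\lambda\mu}(K)\bigr) \ge n_-\bigl(H_{\gamma\lambda\mu}(0)\bigr).
\]

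It then remains to bound $n_{\pm}(H_{\gamma\lambda\mu}(0))$ below by the antisymmetric counts. Because $L^{2,e,a}(\T^2)$ is a reducing invariant subspace of $H_{\gamma\lambda\mu}(0)$ (Section~\ref{sec:hamiltonian}), every eigenpair of $H^{a}_{\lambda\mu}(0)$ is also an eigenpair of $H_{\gamma\lambda\mu}(0)$ with the same eigenvalue; linearly independent antisymmetric eigenfunctions remain linearly independent in the ambient space, so $n_\pm(H_{\gamma\lambda\mu}(0)) \ge n_\pm(H^{a}_{\lambda\mu}(0))$. For $(\lambda,\mu)\in \cG_{\alpha\beta}=\cC^{-}_\alpha\cap\cC^{+}_\beta$, Theorem~\ref{teo:sharpness} yields $n_-(H^{a}_{\lambda\mu}(0))=\alpha$ and $n_+(H^{a}_{\lambda\mu}(0))=\beta$, and chaining the three inequalities gives the claim.

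The main obstacle I foresee is the careful application of the min-max principle at the edge of the continuous spectrum: one must verify that a min-max value which is strictly above $\cE_{\max}(K)$ (resp.\ strictly below $\cE_{\min}(K)$) really corresponds to a genuine isolated eigenvalue of $H_{\gamma\lambda\mu}(K)$, rather than being pinned to the edge of the essential spectrum. This is standard for bounded self-adjoint operators once one knows the essential spectrum explicitly, and the required strict inequality is inherited from the strict sign of the analogous min-max values already known for $H_{\gamma\lambda\mu}(0)$.
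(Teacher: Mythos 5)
Your proof is correct, and it is essentially the argument the paper delegates to Theorem 3.2 of \cite{LKhKh:2021}: the pointwise monotonicity of $\cE_K(p)-\cE_{\min}(K)$ and $\cE_{\max}(K)-\cE_K(p)$ in $K$, lifted to operator inequalities (possible because $V_{\gamma\lambda\mu}$ is $K$-independent), combined with the min-max principle at the common edge $0$ of the shifted essential spectra, and then the reduction $n_\pm(H_{\gamma\lambda\mu}(0))\ge n_\pm(H^{a}_{\lambda\mu}(0))$ with the exact counts from Theorem \ref{teo:sharpness}. Your explicit verification of the two pointwise identities and your remark on why a min-max value strictly above (resp.\ below) the edge must be a genuine discrete eigenvalue supply precisely the details the paper omits.
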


\black

\section{Proof of the main results}\label{sec:proofs}

Let $\{\alpha_1,\alpha_2\} $ be a  system of vectors in
$L^{2,e,a}(\T^2)$, with
\begin{equation}\label{ons2}
\begin{array}{cl}
&\alpha_{1}(p)=\dfrac{\cos p_1-\cos p_2}{\sqrt{2\pi}},\quad  \quad  \quad  \alpha_{2}(p)=\dfrac{\cos  2p_1-\cos 2p_2}{\sqrt{2\pi}}.
\end{array}
\end{equation}

One easily verifies by inspection that the vectors \eqref{ons2}  are orthonormal in $L^{2,e,a}(\T^2)$. By using the
orthonormal systems \eqref{ons2}  one obtains
\begin{align}\label{repr1}
&{V}^{a}_{\lambda\mu}{f}=\frac{\lambda}{4}({f},\alpha_{1})\alpha_{1}+\frac{\mu}{4}({f},\alpha_{2})\alpha_{2}
\end{align}
where  $(\cdot,\cdot)$ is the inner product in $L^{2,e,a}(\T^2).$
For any $z\in\R \setminus[0,\,8]$ we define (the transpose of) the
Lippmann-Schwinger operator (see., e.g., \cite{LSchwinger:1950}) as
\begin{equation*}\label{LSchw1}
{B}_{\lambda\mu}^{a}(0,z)=-{V}_{\lambda\mu}^{a}{R}_0(0,z),
\end{equation*}
where  ${R}_0(0,z):= [{H}_0(0)-zI]^{-1},\,
z\in\mathbb{R}\setminus[0,\,8]$, is the resolvent of the operator
${H}_0(0)$.
\begin{lemma}\label{eigen-eigenvalue}
For each $\lambda,\mu \in \R$  the number $z\in \mathbb{R}\setminus
[0,\,8]$ is an eigenvalue of the operator
${H}_{\lambda\mu}^{a}(0)$ if and only if the number $1$ is an
eigenvalue for
 ${B}_{\lambda\mu}^{a}(0,z)$.
\end{lemma}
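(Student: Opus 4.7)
The plan is to prove this by the standard Birman--Schwinger argument, exploiting the fact that $R_0(0,z)$ is well defined and invertible for $z\in\R\setminus[0,8]$ since the essential spectrum of $H_0(0)$ equals $[0,8]$ by \eqref{eq:ess_spec}.

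For the forward direction, I would start from a nontrivial $f\in L^{2,e,a}(\T^2)$ with $H^{a}_{\lambda\mu}(0)f=zf$, rewrite this as $(H_0(0)-z)f=-V^{a}_{\lambda\mu}f$, and apply $R_0(0,z)$ to both sides to obtain $f=-R_0(0,z)V^{a}_{\lambda\mu}f$. Setting $g:=V^{a}_{\lambda\mu}f$ and applying $V^{a}_{\lambda\mu}$ to the last identity yields $g=-V^{a}_{\lambda\mu}R_0(0,z)g=B^{a}_{\lambda\mu}(0,z)g$. The one thing to check carefully here is that $g\neq0$: if $g=V^{a}_{\lambda\mu}f=0$, then $(H_0(0)-z)f=0$, but $H_0(0)$ is multiplication by the smooth non-constant function $\cE_0$ with range $[0,8]$ not containing $z$, so $H_0(0)-z$ is boundedly invertible, forcing $f=0$, contradicting $f\neq0$.

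For the converse direction, suppose $g\in L^{2,e,a}(\T^2)$ is nonzero with $B^{a}_{\lambda\mu}(0,z)g=g$, i.e.\ $-V^{a}_{\lambda\mu}R_0(0,z)g=g$. Define $f:=R_0(0,z)g$. Since $R_0(0,z)$ is invertible on $L^{2,e}(\T^2)$ and $\cE_0$ is symmetric in $(p_1,p_2)$, the resolvent preserves $L^{2,e,a}(\T^2)$; thus $f\in L^{2,e,a}(\T^2)$ and $f\neq0$ because $g\neq0$. From the identities $V^{a}_{\lambda\mu}f=-g$ and $(H_0(0)-z)f=g$ we get $(H_0(0)-z)f=-V^{a}_{\lambda\mu}f$, i.e.\ $H^{a}_{\lambda\mu}(0)f=zf$, so $z$ is an eigenvalue with eigenfunction $f$.

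The argument is essentially routine, and there is no serious obstacle; the only real point of care is the nonvanishing of $g=V^{a}_{\lambda\mu}f$ in the forward direction, which we resolve by appealing to the fact that $H_0(0)$, being multiplication by $\cE_0$, has no point spectrum and in particular no eigenvalue outside $[0,8]$. Note that the representation \eqref{repr1} makes $V^{a}_{\lambda\mu}$ finite-rank (rank at most $2$), so $B^{a}_{\lambda\mu}(0,z)$ is a finite-rank compact operator on $L^{2,e,a}(\T^2)$; an alternative route, should a more quantitative statement be wanted later, is to reduce the eigenvalue equation $B^{a}_{\lambda\mu}(0,z)g=g$ via the ansatz $g=c_1\alpha_1+c_2\alpha_2$ to a $2\times2$ determinant condition, but this is unnecessary for the present qualitative equivalence.
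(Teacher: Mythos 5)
Your proof is correct and is precisely the standard Lippmann--Schwinger/Birman--Schwinger argument that the paper itself omits with a reference to \cite{Albeverio:1988}; both directions are handled properly, including the two genuine points of care (nonvanishing of $g=V^{a}_{\lambda\mu}f$ via invertibility of $H_0(0)-z$ off $[0,8]$, and invariance of $L^{2,e,a}(\T^2)$ under $R_0(0,z)$). Nothing further is needed.
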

The proof of this lemma is quite standard (see., e.g.,
\cite{Albeverio:1988}) and, thus, we omit~it.

The representation \eqref{repr1} yields the equivalence of the
Lippmann-Schwinger equation
\begin{align*}
{B}_{\lambda\mu}^{a}(0,z){\varphi}={\varphi}, \,\,\,{\varphi}
\in L^{2,e,a}(\T^2)
\end{align*}
to the following algebraic linear system in
$x_i:=({\varphi},\alpha_{i}),\,\,i=1,2$:

\begin{equation}\label{system}
\left\lbrace\begin{array}{ccc}
[1+\lambda a_{11}(z)]x_{1}+\mu a_{12}(z)x_2=0, \\
 \lambda  a_{12}(z) x_{1}+ [1+\mu a_{22}(z)]x_{2}=0,
\end{array}\right.
\end{equation}
where
\begin{align}\label{def:functions}
a_{ij}(z):=&  \frac{1}{4}\int_{\T^2} \frac{\alpha_i(p)\alpha_j(p)\d p}{\cE_0(p)-z},\,\, i,j=1,2.
\end{align}

Let $z\in\mathbb{R}\setminus[0,\,8]$ and
\begin{equation}\label{determinant-ea}\Delta^{a}_{\lambda\mu }(z):=\det[I-{B}_{\lambda\mu}^{a}(0,z)]=
\begin{vmatrix}
 1+\lambda  a_{11}(z)&\mu  a_{12}(z)\\
 \lambda  a_{12}(z)& 1+\mu  a_{22}(z)
\end{vmatrix}.
\end{equation}

\black

The following lemma
describes the relations between the operator  $H_{\lambda \mu }^a$ and the
determinant  $\Delta^{a}_{\lambda \mu}(z)$ defined in  \eqref{determinant-ea}.

\begin{lemma}\label{lem:det_zeros_vs_eigen}
Given $\lambda,\mu \in \R$, a number $z\in
\mathbb{R}\setminus[0,8]$ is an eigenvalue of $H_{ \lambda \mu
}^\mathrm{a}(0)$  if and only if it is a
zero of $\Delta^\mathrm{a}_{ \lambda \mu }(\cdot)$. Moreover, in $\R\setminus [0,8]$ the function $\Delta^\mathrm{a}_{ \lambda \mu  }(\cdot)$ has  at most two zeros.
\end{lemma}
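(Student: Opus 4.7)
The plan is to establish the two assertions in sequence.

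For the equivalence between zeros of $\Delta^{a}_{\lambda\mu}$ and eigenvalues of $H^{a}_{\lambda\mu}(0)$ in $\R\setminus[0,8]$, I would start from Lemma~\ref{eigen-eigenvalue}, which identifies such eigenvalues with non-trivial solutions of the Lippmann--Schwinger equation $B^{a}_{\lambda\mu}(0,z)\varphi=\varphi$. The rank-two representation \eqref{repr1} shows that $V^{a}_{\lambda\mu}$, and consequently $B^{a}_{\lambda\mu}(0,z)=-V^{a}_{\lambda\mu}R_0(0,z)$, has range contained in the span of $\{\alpha_1,\alpha_2\}$; any non-trivial eigenvector $\varphi$ therefore lies in $R_0(0,z)\,\mathrm{span}\{\alpha_1,\alpha_2\}$. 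Taking the inner products of $\varphi=B^{a}_{\lambda\mu}(0,z)\varphi$ with $\alpha_1$ and $\alpha_2$ and introducing $x_i:=(\varphi,\alpha_i)$, one recovers via \eqref{def:functions} precisely the $2\times 2$ linear system \eqref{system}, whose non-trivial solvability is equivalent to the vanishing of its coefficient determinant $\Delta^{a}_{\lambda\mu}(z)$.

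For the bound of at most two zeros, I would appeal to the finite-rank structure of the perturbation. The decomposition \eqref{repr1} exhibits $V^{a}_{\lambda\mu}$ as a self-adjoint operator of rank at most two, whose positive and negative inertia indices $n_\pm(V^{a}_{\lambda\mu})$ are governed by the signs of $\lambda$ and $\mu$ and satisfy $n_+(V^{a}_{\lambda\mu})+n_-(V^{a}_{\lambda\mu})\leq 2$. Iterating the standard rank-one min-max bound then yields that $H^{a}_{\lambda\mu}(0)=H_0(0)+V^{a}_{\lambda\mu}$ has at most $n_+(V^{a}_{\lambda\mu})$ eigenvalues strictly above $\cE_{\max}(0)=8$ and at most $n_-(V^{a}_{\lambda\mu})$ strictly below $\cE_{\min}(0)=0$. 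Combined with the first part, this shows that $\Delta^{a}_{\lambda\mu}$ has at most $n_+(V^{a}_{\lambda\mu})+n_-(V^{a}_{\lambda\mu})\leq 2$ zeros in $\R\setminus[0,8]$.

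The first assertion is essentially bookkeeping, the only care being to verify that projection onto $\mathrm{span}\{\alpha_1,\alpha_2\}$ loses no information; this follows because the equation $\varphi=B^{a}_{\lambda\mu}(0,z)\varphi$ forces $\varphi$ into $R_0(0,z)\,\Ran(V^{a}_{\lambda\mu})$. The main conceptual input is therefore the inertia bound in the second part. As a self-contained alternative avoiding min-max, one could exploit monotonicity of $a_{11}(z)$ and $a_{22}(z)$ on each connected component of $\R\setminus[0,8]$ together with the Cauchy--Schwarz type inequality $a_{11}(z)a_{22}(z)\geq a_{12}(z)^2$ to count sign changes of $\Delta^{a}_{\lambda\mu}$ directly, but the inertia argument is both shorter and more transparent.
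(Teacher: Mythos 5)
The paper omits its own proof of this lemma (deferring to the cited references), and your argument is precisely the standard route that the paper's setup is designed for: Lemma~\ref{eigen-eigenvalue} together with the reduction to the $2\times 2$ system \eqref{system} gives the equivalence between eigenvalues and zeros of the determinant \eqref{determinant-ea}, and the inertia bound $n_+(V^{a}_{\lambda\mu})+n_-(V^{a}_{\lambda\mu})\le\Rank(V^{a}_{\lambda\mu})\le 2$, fed through the min--max principle, caps the total number of eigenvalues of $H^{a}_{\lambda\mu}(0)$ outside $[0,8]$ and hence the number of zeros. One small correction: since $B^{a}_{\lambda\mu}(0,z)=-V^{a}_{\lambda\mu}R_0(0,z)$, a nontrivial solution of $\varphi=B^{a}_{\lambda\mu}(0,z)\varphi$ lies in $\Ran(V^{a}_{\lambda\mu})=\mathrm{span}\{\alpha_1,\alpha_2\}$ itself rather than in $R_0(0,z)\,\mathrm{span}\{\alpha_1,\alpha_2\}$ (the latter describes the untransposed operator $-R_0V^{a}_{\lambda\mu}$); this does not affect the rest of your argument.
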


The proof of this statement is rather standard (cf e.g.
\cite{LKhKh:2021,LBozorov:2009}) and we skip it.

Main properties of the functions \eqref{def:functions} and
asymptotic behavior  as $z\nearrow 0$ or $z\searrow 8$ is described
in the next asserion.

\begin{proposition}\label{prop:asymp_functions}
The functions  $a_{ij}(z)$ $(i,j=1,2)$ are real-valued for
$z\in\mathbb{R}\setminus[0,8]$, strictly increasing and positive in
$(-\infty,0)$, strictly decreasing and negative in $(8,+\infty)$.
Furthermore, the following asymptotics are valid:
\begin{align*}
&\lim\limits_{z\nearrow0}a_{11}(z)=\frac{4-\pi}{\pi}, \,\,\,\, \quad \quad  \lim\limits_{z\searrow 8}a_{11}(z)=-\frac{4-\pi}{\pi}, \\
&\lim\limits_{z\nearrow0}a_{12}(z)=\frac{32-9\pi}{4\pi}, \quad \quad \lim\limits_{z\searrow 8}a_{12}(z)=-\frac{32-9\pi}{4\pi},\\
&\lim\limits_{z\nearrow0}a_{22}(z)=\frac{32-9\pi}{2\pi}, \quad \quad \lim\limits_{z\searrow 8}a_{22}(z)=-\frac{32-9\pi}{2\pi}.
\end{align*}
\end{proposition}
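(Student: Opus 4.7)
The plan is to prove each assertion by direct analysis of the integral representation
\[
a_{ij}(z) = \frac{1}{4}\int_{\T^2}\frac{\alpha_i(p)\alpha_j(p)}{\cE_0(p)-z}\,dp,
\]
exploiting the explicit form of $\alpha_1,\alpha_2$ and the fact that $\cE_0(\T^2)=[0,8]$. For $z\in\R\setminus[0,8]$ the denominator is real and uniformly bounded away from zero on $\T^2$, so $a_{ij}(z)\in\R$; dominated convergence then justifies differentiation under the integral, giving
\[
a_{ij}'(z)=\frac{1}{4}\int_{\T^2}\frac{\alpha_i(p)\alpha_j(p)}{(\cE_0(p)-z)^2}\,dp.
\]
For the diagonal entries $a_{ii}$, the integrand $\alpha_i^2/(\cE_0-z)$ has the sign of $\cE_0-z$, giving $a_{ii}(z)>0$ on $(-\infty,0)$ and $a_{ii}(z)<0$ on $(8,+\infty)$; the derivative integrand $\alpha_i^2/(\cE_0-z)^2$ is pointwise non-negative, so $a_{ii}'(z)>0$, yielding strict monotonicity on each component.

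The off-diagonal function $a_{12}$ is subtler because $\alpha_1\alpha_2$ changes sign on $\T^2$. Here I would use the factorization
\[
\cos 2p_1-\cos 2p_2 = 2(\cos p_1-\cos p_2)(\cos p_1+\cos p_2)
\]
together with $\cos p_1+\cos p_2 = (4-\cE_0)/2$ to rewrite
\[
\alpha_1(p)\alpha_2(p) = \frac{(\cos p_1-\cos p_2)^2(4-\cE_0(p))}{2\pi}.
\]
The algebraic split $(4-\cE_0)/(\cE_0-z) = (4-z)/(\cE_0-z) - 1$, combined with $\int_{\T^2}(\cos p_1-\cos p_2)^2\,dp = 4\pi^2$, then produces the affine relation $a_{12}(z)=(4-z)a_{11}(z)-\pi/2$. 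Applying the same technique to $\alpha_2^2 = (\cos p_1-\cos p_2)^2(4-\cE_0)^2/(2\pi)$ yields $a_{22}(z)=(4-z)a_{12}(z)$, so that all three $a_{ij}$ are ultimately expressed in terms of $a_{11}$, and the sign and monotonicity claims on each component reduce to the ones established for $a_{11}$ together with the precise endpoint values.

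Finally, for the four endpoint asymptotics, I would observe that $\alpha_1,\alpha_2$ each vanish to order $|p-p_0|^2$ at the two critical points $p_0\in\{0,(\pi,\pi)\}$ of $\cE_0$, so the formal integrand $\alpha_i\alpha_j/\cE_0$ (resp.\ $\alpha_i\alpha_j/(\cE_0-8)$) is integrable, and monotone/dominated convergence gives
\[
\lim_{z\nearrow 0}a_{ij}(z)=\frac{1}{4}\int_{\T^2}\frac{\alpha_i\alpha_j}{\cE_0}\,dp,
\]
with an analogous formula at $z\searrow 8$. The involution $p\mapsto \pi-p$, under which $\cE_0\mapsto 8-\cE_0$, $\alpha_1\mapsto -\alpha_1$ and $\alpha_2\mapsto \alpha_2$, yields functional relations linking $a_{ij}(z)$ with $a_{ij}(8-z)$ and reduces the $z\searrow 8$ limits to the $z\nearrow 0$ case. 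The main obstacle, in my view, is the explicit closed-form evaluation of $a_{11}(0^-)$ (and its analogues): these are weighted Watson-type lattice integrals on $\T^2$. I would carry them out via the Laplace representation $1/\cE_0 = \int_0^\infty e^{-t\cE_0}\,dt$ combined with $\int_{-\pi}^\pi e^{t\cos p}\,dp = 2\pi I_0(t)$, reducing each $a_{ij}(0^-)$ to a one-dimensional integral against products of the modified Bessel functions $I_0, I_1$, which then evaluate in closed form to produce the stated constants involving $\pi$.
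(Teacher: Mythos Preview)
Your approach differs substantially from the paper's. The paper quotes the $a_{11}$ limits from \cite{LKhKh:2021} and then evaluates only the two threshold differences
\[
\lim_{z\nearrow 0}\bigl(a_{12}(z)-2a_{11}(z)\bigr)\quad\text{and}\quad
\lim_{z\nearrow 0}\bigl(a_{22}(z)-2a_{12}(z)\bigr),
\]
observing that in each case the numerator contains a factor that cancels $\cE_0$ exactly, so the limits reduce to elementary trigonometric integrals. Your route via the identity $\alpha_2=(4-\cE_0)\alpha_1$ and the resulting global relations $a_{12}(z)=(4-z)a_{11}(z)-\tfrac14\|\alpha_1\|^2$ and $a_{22}(z)=(4-z)a_{12}(z)$ is more structural and valid for all $z$, and the involution $p\mapsto\pi-p$ handles both thresholds at once; the Laplace/Bessel evaluation of $a_{11}(0^-)$ is more work than a citation but makes the argument self-contained.

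There is, however, a genuine gap in your treatment of $a_{12}$. The assertion that its sign and monotonicity ``reduce to the ones established for $a_{11}$'' via the affine relation is not justified: differentiating gives $a_{12}'(z)=-a_{11}(z)+(4-z)a_{11}'(z)$, whose two terms have opposite signs on $(-\infty,0)$, so strict monotonicity does not follow; and on $(8,+\infty)$ the product $(4-z)a_{11}(z)$ is \emph{positive}, so negativity of $a_{12}$ there is not a consequence of the relation either. More tellingly, your own symmetry observation ($\alpha_1\mapsto-\alpha_1$, $\alpha_2\mapsto\alpha_2$ under $p\mapsto\pi-p$) yields $a_{12}(8-z)=a_{12}(z)$, so $a_{12}$ takes the \emph{same} value at the two thresholds and cannot be positive near $0^-$ and negative near $8^+$ as stated. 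You should either supply a separate argument for the off-diagonal entry (e.g.\ a direct sign analysis of the integrand using $\alpha_1\alpha_2=(4-\cE_0)\alpha_1^2$), or flag that the sign/monotonicity claims for $a_{12}$ in the proposition require revision; the paper's short proof does not actually address them.
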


\begin{proof}
The asymptotic relations for the function
$a_{11}(\cdot)$ have
been established in
 \cite{LKhKh:2021}.  The remaining asymptotic
relations are proven by using  the
following limit equalities:

\begin{align*}
\lim\limits_{z\nearrow 0} \Big( a_{12}(z)-2a_{11}(z)\Big)=&\frac{1}{8\pi^2}\int_{\T^2}\frac{(\cos p_1 - \cos p_2)^2(\cos p_1 + \cos p_2-2) \,\d
p}{\cE_0(p)}=\\
&-\frac{1}{16\pi^2}\int_{\T^2}(\cos p_1 - \cos p_2)^2 \,\d
p=-\frac{1}{4} , 	\\
\lim\limits_{z\nearrow 0} \Big(a_{22}(z)-2a_{12}(z)\Big)=&-\frac{1}{16\pi^2}\int_{\T^2}(\cos 2p_1 - \cos 2p_2)(\cos p_1 - \cos p_2) \,\d
p=0 .
\end{align*}

The  asymptotical formulae for $z > 8$  are derived
analogously and, thus, we skip the respective computation.
\end{proof}

\begin{lemma}\label{lem:asymp_det}
The function $\Delta^{a}_{\lambda \mu }(z)$ is real-valued and well defined on $\mathbb{R}\setminus[0,8]$. Moreover it has following asymptotics:
\begin{itemize}
\item[(a)]$\lim\limits_{z\rightarrow {\pm\infty}}\Delta^a_{\lambda \mu }(z)=1$,
\item[(b)]$\Delta^a_{\lambda \mu }(z) =C^{-}(\lambda,\mu)+o(1)$, as $z\nearrow 0,$
\item[(c)]$\Delta^a_{\lambda \mu }(z) =C^{+}(\lambda,\mu)+o(1)$, as $z\searrow 8,$
\end{itemize}
where $C^{+}(\lambda,\mu)$ resp.$C^{-}(\lambda,\mu)$ defined as \eqref{polynomial:C+} resp. \eqref{polynomial:C-}.
 \end{lemma}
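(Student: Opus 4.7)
The plan is to expand the $2\times 2$ determinant in \eqref{determinant-ea} as
\[
\Delta^{a}_{\lambda\mu}(z)=1+\lambda\,a_{11}(z)+\mu\,a_{22}(z)+\lambda\mu\bigl(a_{11}(z)a_{22}(z)-a_{12}(z)^{2}\bigr),
\]
and then analyse each term separately via the integral representations in \eqref{def:functions}. The fact that $\Delta^{a}_{\lambda\mu}(z)$ is real-valued and well-defined on $\mathbb{R}\setminus[0,8]$ is immediate: for such $z$ the dispersion $\cE_{0}(p)-z$ is uniformly bounded away from $0$ on $\T^{2}$ (since $\cE_{0}(\T^{2})=[0,8]$), hence each $a_{ij}(z)$ is a finite real number, and the determinant inherits these properties. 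This also shows that the integrands in $a_{ij}(z)$ vary continuously with $z$ on each of the intervals $(-\infty,0)$, $(8,+\infty)$, so the one-sided limits that will be needed below are genuine boundary values.

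For part (a), I would estimate $|a_{ij}(z)|$ by pulling out the bound $|\cE_{0}(p)-z|^{-1}\le (|z|-8)^{-1}$ for $|z|>8$, which gives
\[
|a_{ij}(z)|\le \frac{1}{4(|z|-8)}\int_{\T^{2}}|\alpha_{i}(p)\alpha_{j}(p)|\,\d p \;\longrightarrow\;0\qquad\text{as }|z|\to\infty.
\]
Feeding this into the expanded formula above yields $\Delta^{a}_{\lambda\mu}(z)\to 1$. (Equivalently, by dominated convergence on $1/(\cE_{0}(p)-z)\to 0$.)

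For parts (b) and (c), I would simply substitute the one-sided boundary values of $a_{11},a_{12},a_{22}$ supplied by Proposition \ref{prop:asymp_functions} into the expanded determinant formula and collect terms. As $z\nearrow 0$, monotone convergence guarantees that $a_{ij}(z)$ tends to its stated limit, so
\[
\Delta^{a}_{\lambda\mu}(z)=1+\lambda\,a_{11}(0^{-})+\mu\,a_{22}(0^{-})+\lambda\mu\bigl(a_{11}(0^{-})a_{22}(0^{-})-a_{12}(0^{-})^{2}\bigr)+o(1),
\]
and an analogous expansion holds as $z\searrow 8$. Matching constant, $\lambda$, $\mu$ coefficients with \eqref{polynomial:C-}, \eqref{polynomial:C+} is direct; the only nontrivial arithmetic is the $\lambda\mu$ coefficient, which requires simplifying $a_{11}(0^{-})a_{22}(0^{-})-a_{12}(0^{-})^{2}$ (respectively at $8^{+}$) into the closed form appearing in $C^{\pm}(\lambda,\mu)$.

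The main obstacle is not conceptual but bookkeeping: verifying that the particular combination $a_{11}a_{22}-a_{12}^{2}$ of the boundary values in Proposition \ref{prop:asymp_functions} produces exactly the $\lambda\mu$ coefficient displayed in \eqref{polynomial:C+}--\eqref{polynomial:C-}. It is here that one uses the auxiliary identities already isolated in the proof of Proposition \ref{prop:asymp_functions} (for instance the vanishing of $\lim_{z\nearrow 0}(a_{22}(z)-2a_{12}(z))$, which tightly links $a_{22}(0^{-})$ and $a_{12}(0^{-})$); without this one would be left with a messy rational expression in $\pi$ rather than the clean form $(32-9\pi)/(4\pi)$. After that algebraic simplification is executed, the three asymptotics drop out immediately from the expanded determinant.
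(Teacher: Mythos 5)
Your approach coincides with the paper's own proof, which is only a two-line sketch: item (a) via dominated convergence, items (b)--(c) by inserting the boundary limits of Proposition \ref{prop:asymp_functions} into the determinant \eqref{determinant-ea}. The expansion $\Delta^{a}_{\lambda\mu}=1+\lambda a_{11}+\mu a_{22}+\lambda\mu(a_{11}a_{22}-a_{12}^{2})$, the bound $|\cE_{0}(p)-z|\ge|z|-8$ for (a), and the well-definedness argument are all fine. One small inaccuracy: monotone convergence justifies the boundary limits of $a_{11}$ and $a_{22}$ (nonnegative integrands, increasing as $z\nearrow 0$), but not of $a_{12}$, whose integrand changes sign; there you need dominated convergence with the majorant $|\alpha_{1}\alpha_{2}|/\cE_{0}$, integrable because $\alpha_{1}(p)\alpha_{2}(p)=O(|p|^{4})$ at the zero of $\cE_{0}$.

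The genuine problem sits in the step you dismiss as ``direct bookkeeping.'' Substituting the limits of Proposition \ref{prop:asymp_functions} exactly as printed gives, as $z\nearrow 0$,
\[
\Delta^{a}_{\lambda\mu}(z)\;\longrightarrow\;1+\tfrac{4-\pi}{\pi}\,\lambda+\tfrac{32-9\pi}{2\pi}\,\mu+\tfrac{32-9\pi}{16\pi}\,\lambda\mu,
\]
since $a_{11}a_{22}-a_{12}^{2}\to\tfrac{32-9\pi}{16\pi^{2}}\bigl[8(4-\pi)-(32-9\pi)\bigr]=\tfrac{32-9\pi}{16\pi}$; the same value follows from the shortcut you propose, because $a_{22}=2a_{12}$ and $a_{12}-2a_{11}=-\tfrac14$ give $a_{11}a_{22}-a_{12}^{2}=a_{12}/4$. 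This is \emph{not} $C^{-}(\lambda,\mu)$ as defined in \eqref{polynomial:C-}: the $\mu$-coefficient is off by a factor of $2$ and the $\lambda\mu$-coefficient by a factor of $4$, and an analogous mismatch occurs at $z\searrow 8$ against \eqref{polynomial:C+}. (Indeed \eqref{polynomial:C+} and \eqref{polynomial:C-} are not even mutually consistent in the modulus of their $\mu$-coefficients, although $\lim_{z\nearrow 0}a_{22}(z)=-\lim_{z\searrow 8}a_{22}(z)$ forces them to be.) The constants become consistent only if the limits of $a_{12}$ and $a_{22}$ at $z=0^{-}$ are $\tfrac{32-9\pi}{2\pi}$ and $\tfrac{2(32-9\pi)}{\pi}$, i.e.\ either Proposition \ref{prop:asymp_functions} or the formulas for $C^{\pm}$ carry typos. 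So a proof along your lines cannot simply ``match coefficients'': it must actually recompute the integrals in \eqref{def:functions} at the spectral edges and then state (or correct) $C^{\pm}$ accordingly. Until that is done, the asymptotics (b) and (c) are not established.
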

 \black

\begin{proof}

The first item follows from the Lebesgue dominated convergence theorem. The proof of the last items  can be derived by using the asymptotics of the functions $a_{ij}(\cdot),i,j=1,2,$ in Proposition \ref{prop:asymp_functions}.
\end{proof}

\textit{Proof of Theorem \ref{teo:constant-eea} }.
Let  $\cC\in\{\cC^{+}_{0},\cC^{+}_{1},\cC^{+}_{2}\}$.
Since for any $(\lambda,\mu)\in\cC$ the determinant $\Delta^{a}_{\lambda\mu}(z)$ is real analytic in $z \in \{\Re \,z>8\}$
there exist positive numbers $B_2>B_1>8$ such that the function $\Delta^{a}_{\lambda\mu}(z)$ has only finite number zeros in the interval$(B_1,B_2)$.

Let $(\lambda_0,\mu_0)\in \cC$ be a point and $z_0>8$ be a zero of multiplicity $m=0,1,2,...$ for the function $\Delta^{a}_{\lambda_0\mu_0}(z)$.  Since for each fixed $z>8$ the determinant $\Delta^{a}_{\lambda\mu}(z)$ is a real analytic function in $(\lambda,\mu)\in\cC$ and
the function $\Delta^{a}_{\lambda\mu}(z)$ is real-analytic in $z\in(8,+\infty)$, for each $\varepsilon>0$ there are numbers $\delta>0$, $\eta> 0$ and a neighborhood $W_{\eta}(z_0)$ of $z_0>8$ such that for all $z\in \overline{W_{\eta}(z_0)}$ and $( \lambda,\mu )\in\cC$ satisfying the conditions $|z-z_0|=\eta$ and $||(\lambda,\mu)-(\lambda_0,\mu_0)||<\delta$
the following two inequalities $|\Delta^{a}_{\lambda_0\mu_0}(z)|>\eta$ and $|\Delta^{a}_{\lambda\mu}(z)-\Delta^{a}_{\lambda_0\mu_0}(z)|<\epsilon $ are correct.
Then by Rouche's theorem for all $(\lambda,\mu)\in\cC$ satisfying the condition $||(\lambda,\mu)-(\lambda_0,\mu_0)||<\delta $ the number of zeros of the function $\Delta^{a}_{\lambda\mu}(z)$ remains constant in ${U_{\eta}(z_0)}$. Since the zero $z_0>8$ of the function $\Delta^{a}_{\lambda\mu}(z)$ is arbitrary in $(B_1,B_2)$ we conclude that for all $(\lambda,\mu)\in\cC$ satisfying the condition $||(\lambda,\mu)-(\lambda_0,\mu_0)||<\delta$ the number of zeros of the function $\Delta^{a}_{\lambda\mu}(z)$ remains constant.

Moreover each Jordan curve $\Gamma\subset\cC$ connecting any two points of $\cC$ is a {\it compact set}, therefore for any $(\lambda,\mu)\in \Gamma$ the number of zeros of the function $\Delta^{a}_{\lambda\mu}(z)$ lying above eight  remains a constant. Then, Lemma
\ref{lem:det_zeros_vs_eigen} gives that the number of eigenvalues of the
operator ${H}^{a}_{\lambda\mu}(0)$ is a constant.
$\square$

\black
Further, we prove our main results.

\textit{Proof of Theorem \ref{teo:sharpness}}
Since to simmetricity we prove the lemma only for the case $(\lambda,\mu)\in \cC^{+}_{i},i=0,1,2$. The other cases can be proved similarly.

 We note that, by Lemma \ref{lem:asymp_det}
\begin{equation}\label{behav_sim_infty}
\lim\limits_{z\to+\infty} \Delta^{a}_{\lambda\mu}(z)  =1
\end{equation}
and
\begin{align*}
\Delta^{a}_{\lambda\mu}(z)=& C^{+}(\lambda,\mu):= 1 -\tfrac{(4-\pi)}{\pi}\lambda - \tfrac{2(32-9\pi)}{\pi}\mu+\tfrac{(32-9\pi)}{4\pi}\lambda \mu,\,\, as\,\, z\searrow 8.
\end{align*}

(a) According to definitions  \eqref{polynomial:C+}   and \eqref{six_sets} we have that  $(0,0)\in \cC^{+}_{0}$. The representation
\eqref{determinant-ea} of determinant $\Delta^{a}_{\lambda\mu}(z)$ yields that
$$\Delta^{a}_{00}(z)\equiv 1.$$ Therefore the function $\Delta^{a}_{00}(z)$ has no zeros in $(8,+\infty)$.
Lemma \ref{lem:det_zeros_vs_eigen} yields that the operator $H_{00}^{a}(0)$ has no one eigenvalues above the essential spectrum. Theorem \ref{teo:constant-eea} gives that the operator $H_{\lambda\mu}^{a}(0)$ has not any bound states with energy lying above the essential spectrum for all $(\lambda,\mu)\in \cC^{+}_{0}$.

(b) Let $(\lambda,\mu)\in \cC^{+}_{1}$ be an arbitrary point. By using  the definition of $\cC^{+}_{1}$ in \eqref{six_sets}  we get
\begin{align}\label{ineq:b1}
1 -\tfrac{(4-\pi)}{\pi}\lambda - \tfrac{2(32-9\pi)}{\pi}\mu+\tfrac{(32-9\pi)}{4\pi}\lambda \mu<0.
\end{align}

According to relation \eqref{ineq:b1} we have that
\begin{equation}\label{ineq:b2}
\lim\limits_{z\searrow 8}\Delta^{a}_{\lambda\mu}(z)<0 \,\, \text{and}  \, \lim\limits_{z\rightarrow +\infty}\Delta^{a}_{\lambda\mu}(z)=1.
\end{equation}

The expressions \eqref{ineq:b2} show that the continuous function
$\Delta^{a}_{\lambda\mu}(\cdot)$ changes sign in $(8,+\infty),$ so
that the function $\Delta^{a}_{\lambda\mu}(z)=0$ has at least one
zero in $(8,+\infty).$ We show that the function $\Delta^{a}_{\lambda\mu}(z)=0$ has exactly one
zero in $(8,+\infty).$ To the contrary, assume that this function has at least  two zeros.
On the other hand, since $\Delta^{a}_{\lambda\mu}(\cdot)$ has different signs
at the endpoints of $(8,+\infty),$
the function $\Delta^{a}_{\lambda\mu}(\cdot)$ should have at least three
zeros.
This contradicts to Lemma
\ref{lem:det_zeros_vs_eigen}. Hence, the function  $\Delta^{a}_{\lambda\mu}(\cdot)$ must have exactly one zero in $(8,+\infty).$ By Lemma \ref{lem:det_zeros_vs_eigen} the operator $H^{a}_{\lambda\mu}(0)$ has a unique antisimmetric bound state with energy lying above the essential spectrum.

(c) Now we suppose that $(\lambda,\mu)\in \cC^{+}_{2}$ be an arbitrary point. According to \eqref{six_sets}  the following inequalities
\begin{align}\label{ineq1}
1 -\tfrac{(4-\pi)}{\pi}\lambda - \tfrac{2(32-9\pi)}{\pi}\mu+\tfrac{(32-9\pi)}{4\pi}\lambda \mu>0 \quad  \text{and} \quad \mu>\mu_0
\end{align}
 are hold. \black

 \eqref{ineq1} leads that
$$\lambda>\frac{4}{\mu - \mu_0} + 8>8.$$

Proposition \ref{prop:asymp_functions} and inequality $\lambda>8$ obeys that
\begin{align*}
\lim\limits_{z\searrow 8}\Delta^{a}_{\lambda0}(z)=1 -\tfrac{(4-\pi)}{\pi}\lambda<0
 \quad \text{and} \quad \lim\limits_{z\rightarrow +\infty}\Delta^{a}_{\lambda0}(z)=1.
\end{align*}

Therefore the continuous and monotone increasing function $\Delta^{a}_{\lambda0}(z)=1+\lambda a_{11}(z)$
has a unique zero $z_{11}$ above the essential spectrum. Then we observe that
\begin{align}\label{ineq2}
\Delta^{a}_{\lambda\mu}(z_{11})=(1+\lambda  a_{11}(z_{11}))(1+\mu  a_{22}(z_{11}))-\lambda\mu  (a_{12}(z_{11}))^2=-\lambda\mu (a_{12}(z_{11}))^2<0.
\end{align}

The inequalities \eqref{ineq1} and \eqref{ineq2} yield the following relations
$$
\lim\limits_{z\searrow 8}\Delta^{a}_{\lambda\mu}(z)>0, \quad  \Delta^{a}_{\lambda\mu}(z_{11})<0  \quad \text{and} \quad  \lim\limits_{z\rightarrow +\infty}\Delta^{a}_{\lambda\mu}(z)=1,
$$
which implies that the continuous function $\Delta^{a}_{\lambda\mu}(z)$ has at least one zero in each intervals $(8,z_{11})$ and
$(z_{11},+\infty)$. Therefore there exists  real numbers satisfying the inequalities  $8<z_{22}<z_{11}<z_{21}$
such that
\begin{align*}
\Delta^{a}_{\lambda\mu}(z_{21})=\Delta^{a}_{\lambda\mu}(z_{22})=0.
\end{align*}
Lemma \ref{lem:det_zeros_vs_eigen} implies that  the operator $H^{a}_{\lambda\mu}(0)$ has exactly two antisymmetric bound states with energy lying  above the essential spectrum.
 \hfill $\square$

\textit{ Theorem \ref{teo:eigenK}} can be proved as  Theorem 3.2 in \cite{LKhKh:2021}. $\square$

\black

\end{document}